\documentclass{article} 

\PassOptionsToPackage{numbers, compress}{natbib}
\usepackage[preprint]{neurips_2026}

\usepackage[table]{xcolor}
\usepackage{graphicx,subcaption}
\usepackage{booktabs,natbib,multirow,makecell,microtype}
\usepackage[moderate]{savetrees}
\usepackage{amsmath,amsfonts,amssymb,mathtools,amsthm,bm}
\usepackage{hyperref,url}
\usepackage[capitalize,noabbrev]{cleveref}
\usepackage{algorithmic, algorithm}
\usepackage{enumerate}

\theoremstyle{plain}
\newtheorem{theorem}{Theorem}[section]

\newtheorem{lemma}[theorem]{Lemma}
\theoremstyle{definition}
\theoremstyle{remark}

\title{Channel-Oriented Design for \\ EEG-to-Music Reconstruction}

\author{
Jiaxin Qing \\
UC Berkeley \\
\And
Junwei Lu \\
Harvard University \\
\And
Lexin Li \\
UC Berkeley \\
}

\begin{document}

\maketitle
\begin{abstract}
Brain-computer interfaces aim to decode naturalistic stimuli from neural signals, yet most progress to date has focused on vision and language. In this article, we study a more challenging but far less explored setting, EEG-to-music reconstruction, where signals are weak, distributed, and highly susceptible to noise and channel variability. Our central finding is that early channel mixing destroys weak but discriminative EEG signals. To address this, we propose a channel-oriented design with three key components. Specifically, channel-wise tokenization treats each electrode as an explicit token to retain spatially localized neural evidence, channel-wise multi-view self-distillation enforces consistency across temporal crops and random channel subsets to learn robust and distributed representations, and channel-wise data augmentation introduces structured channel dropout to improve invariance to noise, artifacts, and missing electrodes. Together, these components preserve weak yet informative signals across channels and enable stable alignment to a semantic music representation space. We integrate this channel-oriented design within an encoding-alignment-decoding pipeline for EEG-to-music reconstruction. Theoretically, we characterize when preserving channel-level structure leads to improved alignment. Empirically, we compare with a range of state-of-the-art baselines and demonstrate consistent and significant performance gains. Codes are available at \url{https://github.com/jqin4749/EEG-to-Music}.
\end{abstract}

\section{Introduction}
    
Brain-computer interfaces (BCIs) aim to decode naturalistic stimuli from neural activities measured by techniques such as functional magnetic resonance imaging (fMRI) and electroencephalography (EEG), providing a pathway toward understanding perception and enabling new forms of human-machine interaction. Recent advances, mostly driven by deep learning and generative models, have led to impressive progress in reconstructing visual and linguistic content from brain signals \citep{eegdecoding1, mindeye2, tang2023semantic, mindvideo, takagi2023high}. 

In this article, we study EEG-to-music reconstruction, i.e., recovering semantically faithful music from noninvasive EEG signals recorded while a subject listens to naturalistic songs. This setting is far less explored than vision or language reconstruction, and is especially challenging, because music is continuous, temporally structured, and semantically rich, involving rhythm, melody, timbre, and emotion, whereas EEG signals are weak, distributed across electrodes, and highly sensitive to artifacts and channel variability. Critically, music-related information is not localized to a single region but is encoded as subtle patterns spread across many channels, each providing a partial and noisy view of the underlying neural activity. To date, only limited work has studied EEG-to-music reconstruction \citep{nmedh,nmedt,eeg2mel}.
    
Despite these challenges, existing approaches largely adopt standard deep learning pipelines that mix channels early, either through convolution, pooling, or block-wise tokenization. While such designs are useful in other domains, they are poorly suited for EEG-to-music reconstruction. Early channel mixing entangles signal with channel-specific noise, obscures the spatial origin of neural evidence, and dilutes discriminative patterns that are only visible across a set of weak electrodes. As a result, current music reconstruction is often ineffective. 

Our central observation is that a key bottleneck in EEG-to-music reconstruction lies in representation learning rather than decoding: prematurely mixing channels obscures the signals needed for downstream alignment and generation. This motivates us to adopt and develop a new design principle, channel orientation, where the goal is not to avoid cross-channel interactions, but instead to preserve electrode-level structure and defer channel integration to later stages where it can be learned adaptively. We instantiate this principle through three complementary components. First, channel-wise tokenization treats each electrode as an explicit token, preserving spatially localized neural evidence and allowing the model to learn cross-channel relationships without losing channel identity. Second, channel-wise multi-view self-distillation enforces consistency across temporal crops and random channel subsets, encouraging the model to capture distributed and stable neural patterns while remaining robust to partial observations. Third, channel-wise data augmentation, implemented via structured channel dropout, improves invariance to noise, artifacts, and missing electrodes by forcing the model to infer consistent representations from varying channel configurations. Together, these components enable the model to retain weak but informative signals distributed across channels, and to build a robust representation suitable for downstream reconstruction.

We integrate the channel-oriented design within an encoding-alignment-decoding pipeline, further incorporating a pretrained CLAP music encoder, a standard CLIP-style contrastive learning alignment, and a pretrained diffusion model, for EEG-to-music reconstruction. We evaluate our method both theoretically and empirically. We summarize our main contributions as follows.
            
\begin{itemize}
\setlength{\itemsep}{3pt}
\item A representation-centric perspective for EEG-based music reconstruction: We identify premature channel mixing as a main failure mode in existing EEG pipelines, and recast EEG-to-music reconstruction as a problem of preserving and aligning channel-level neural signals before generation. This perspective shifts the focus from decoder capacity to representation design, and our results support the central claim that representation design, not only generative decoder capacity, is crucial for EEG decoding of naturalistic stimuli.
    
\item A unified channel-oriented framework for EEG representation learning: We develop a coherent design that integrates channel-wise tokenization, multi-view self-distillation, and structured channel-level augmentation. These components act jointly to enforce consistency under varying channel subsets and temporal resolutions, while still allowing adaptive cross-channel aggregation, and producing stable EEG representations for music alignment.

\item Theoretical analysis and understanding of the proposed design: We analyze channel masking through the induced augmentation graph and show that, under some reasonable conditions, channel-wise masking yields strictly smaller normalized cross-cluster overlap than block masking. This result characterizes when preserving channel-level structure leads to improved alignment. 

\item Strong empirical validation and comparisons: We perform systematic comparisons against EEG2Mel, the primary existing EEG-to-music solution, as well as multiple state-of-the-art EEG foundation models within a unified reconstruction pipeline. Our method achieves the best performance on semantic and embedding-level metrics, including a CLAP score of 0.683 and a 50-way identification accuracy of 0.487, improving over the strongest baselines by significant margins. Ablation studies further confirm that each component contributes meaningfully to the final performance.
    
\item Improved interpretability: By preserving channel-level representations, our approach enables direct inspection of per-channel contributions via attention weights. The resulting channel attention patterns reveal anatomical and stimulus-specific structures, such as temporal-region dominance consistent with auditory processing, and capture higher-level effects such as cultural familiarity. This provides a level of interpretability that is typically absent in blackbox EEG foundation models.
\end{itemize}

\section{Related Work}
\label{sec:related}
    
\noindent
\textbf{Brain Decoding.}
Brain decoding has advanced rapidly in vision and language \citep{eegdecoding1, mindeye2, tang2023semantic, mindvideo, takagi2023high}. In vision, \cite{takagi2023high} maps fMRI signals into the latent space of Stable Diffusion, while \cite{mindeye2} employs CLIP-style contrastive learning to align fMRI with image representations. For EEG-based image decoding, \cite{eegdecoding1} also adopts CLIP-based alignment. A smaller body of work focuses on pretraining EEG encoders prior to alignment, including \cite{mindvis} and \cite{bai2024dreamdiffusion}. Our work follows this encoder-first paradigm, but investigates whether it can support a continuous music target from EEG, whose perceptual content depends on temporal structure as well as high level semantics.
    
\noindent
\textbf{EEG-to-Music Reconstruction.}
Early work by \cite{nmedh,nmedt} introduces naturalistic music EEG datasets and studies neural responses to music through correlation-based analyses. EEG2Mel~\citep{eeg2mel}, the closest reconstruction method to ours, maps EEG directly to mel spectrograms and then reconstructs audio from those spectra. This makes spectrogram prediction the central learning target. In contrast, we learn an explicit EEG-music alignment space and apply a pretrained audio generator only after alignment, enabling us to assess whether the EEG representation captures semantic musical information rather than merely local spectrogram similarity.
    
\noindent
\textbf{EEG Encoders.}
Recent transformer-based EEG encoders learn general neural representations from large-scale datasets \citep{neurogpt, labram, eegdino, wang2024eegpt, wang2024cbramod, wang2025eegmamba, wang2026deeperbrain}, but are often evaluated on classification tasks rather than high-level stimulus reconstruction. While these models provide useful EEG priors, they do not explicitly preserve electrode identity throughout the full pipeline from pretraining to music alignment. For instance, LaBraM~\citep{labram} uses per-channel tokens with masked prediction over a learned neural codebook, whereas EEG DINO~\citep{eegdino} adopts a multi-view self-supervised framework with spatial patch tokens that group neighboring channels, making it the closest baseline for our block-tokenization comparison. EEGPT~\citep{wang2024eegpt} and CBraMod~\citep{wang2024cbramod} incorporate architectural components such as convolutional stems or criss-cross attention that mix information across dimensions. In contrast, we treat channel orientation as a design constraint across tokenization, self-distillation, and alignment augmentation, and evaluate its effect in EEG-to-music reconstruction.

\section{Channel Oriented Design}
\label{sec:design}
    
\begin{figure}[t!]
\centering
\includegraphics[width=\linewidth,height=2.75in]{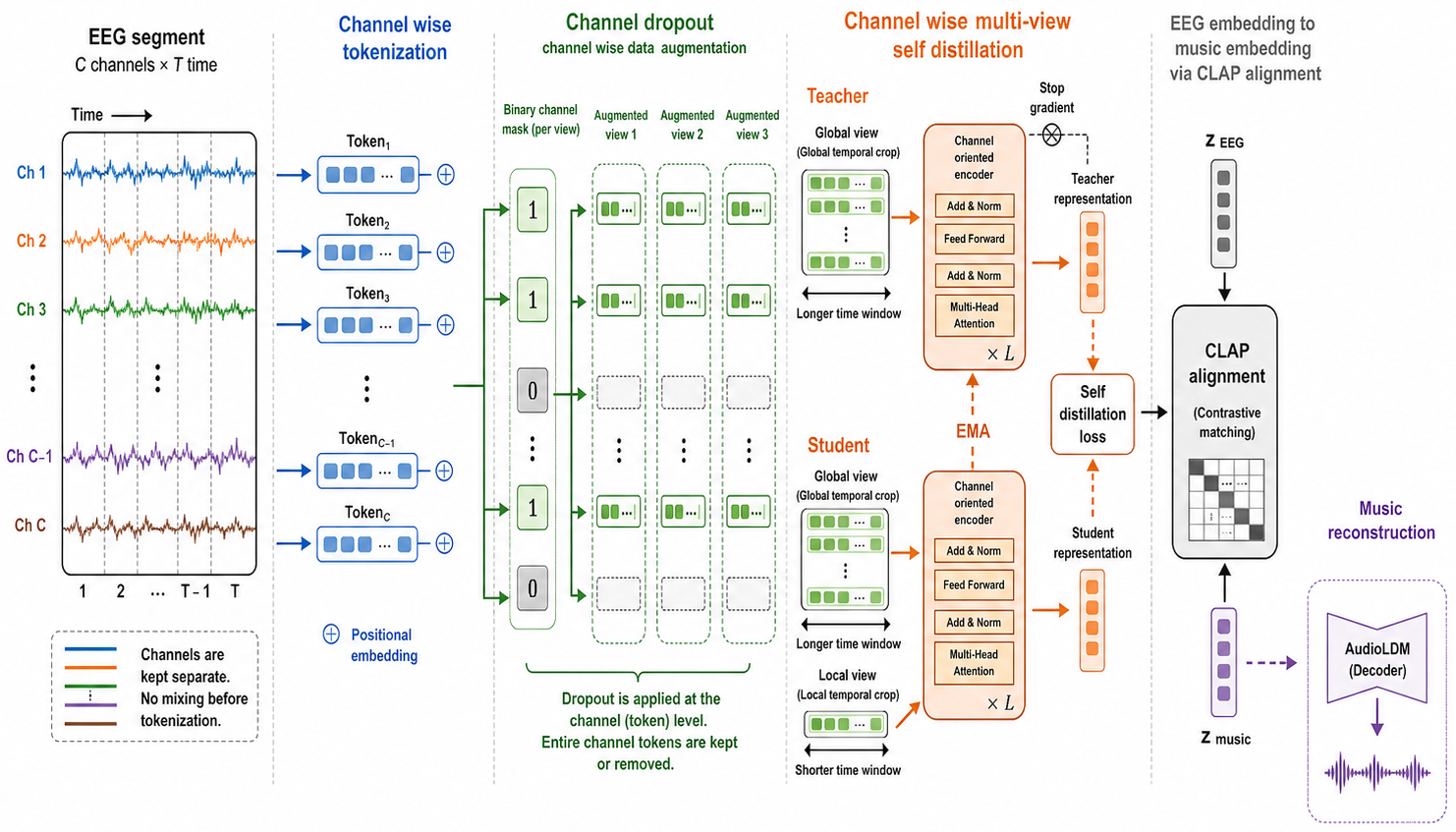}
\caption{Overview of the pipeline architecture. Each channel is tokenized separately, and random channel subsets are dropped during training. Local views are short temporal crops, while global views are long temporal crops. The student receives both view types, the teacher receives only global views, and the objective matches their representations of the same segment.}
\label{fig:eeg_encoder}
\end{figure}

Our EEG-to-music pipeline consists of three main stages: encoding of EEG and music signals into latent representations, alignment of the two modalities in a shared embedding space, and decoding for music reconstruction. On the music side, we adopt standard components, using the pretrained CLAP audio encoder \citep{clap, audioldm} to obtain semantic audio embeddings, and a pretrained AudioLDM decoder \citep{audioldm} for music generation. Alignment is performed using a CLIP-style contrastive learning framework. Our primary contribution lies in the EEG representation and its integration with alignment. We introduce a channel-oriented design with three key components: channel-wise tokenization and channel-wise multi-view self-distillation, which are applied during the EEG encoding stage to preserve and aggregate weak, distributed neural signals, and channel-wise data augmentation, which is applied during the alignment stage to improve robustness. Figure \ref{fig:eeg_encoder} provides an overview of the main architecture of our pipeline.

\subsection{Channel-wise Tokenization, Multi-view Self-distillation, and Data Augmentation}
    
\noindent
\textbf{Channel-wise tokenization.} 
First, we propose a transformer-based encoder that treats each channel as a first-class unit. Instead of forming tokens from spatial channel groups or applying early channel mixing, we construct tokens separately within each channel and let the transformer learn cross-channel interactions afterward. This design preserves the spatial origin of each measurement while still allowing global attention to combine evidence across electrodes, retaining both channel identity and temporal structure.
    
Specifically, each trial is divided into temporal windows of length $t$ and stride $s$. At the 125 Hz sampling rate, each window contains $125t$ time samples. Within a window, every electrode signal is partitioned into temporal patches with positional embeddings. The transformer then attends over the resulting channel time tokens and produces $n+1$ latent tokens, where the additional token is the CLS representation used for downstream alignment.
    
\noindent
\textbf{Channel-wise multi-view self-distillation.} 
Next, we propose to pretrain the channel-oriented encoder using a DINO-style multi-view self-distillation objective \citep{dino} before any paired EEG-music training. This design enforces consistency across temporal scales and observed electrode subsets. Crucially, electrode identity remains explicit through channel tokens, while the pretraining objective encourages the representation to remain stable under missing or noisy channels, thereby improving robustness prior to alignment. 
    
Specifically, each EEG segment is projected into a latent space and augmented into multiple temporal views: global views are large crops spanning 50-90\% of the time series, and local views are smaller crops spanning 10-50\%. The teacher processes only global views, while the student processes both global and local views under random channel dropout, so they often observe different channel subsets. Both share the same channel-oriented encoder and projection head. The student is trained to match the teacher's output distribution via cross-entropy, with the teacher updated as an exponential moving average of the student. Each teacher output is matched with all student views of the same segment, which enforces consistency across temporal scales and channel subsets, and encourages the model to aggregate information across electrodes and to distinguish stable stimulus-related structure from channel-specific noise. Algorithm \ref{alg:pretraining} outlines the main steps of channel-wise multi-view self-distillation. 
    
\begin{algorithm}[h]
\caption{Channel-wise multi-view self-distillation.}
\label{alg:pretraining}
\small
\begin{algorithmic}[1]
            \STATE \textbf{Input:} EEG dataset $\mathcal{D}$
            \STATE \textbf{Initialize:} student encoder $\theta_s$, teacher encoder $\theta_t \leftarrow \theta_s$
            \WHILE{not converged}
                \STATE Sample mini-batch $X$ from $\mathcal{D}$
                \STATE Generate two global crops $V_g$ and $K$ local crops $V_l$ for each $x\in X$
                \STATE $Z_s \leftarrow \{f_{\theta_s}(v) \mid v\in V_g\cup V_l\}$ with channel dropout; \mbox{   } $Z_t \leftarrow \operatorname{stopgrad}\{f_{\theta_t}(v) \mid v\in V_g\}$
                \STATE $\mathcal{L} \leftarrow \sum_{z_t\in Z_t}\sum_{z_s\in Z_s\setminus\{z_t\}} \mathrm{CE}(z_t/\tau_t,z_s/\tau_s)$
                \STATE $\theta_s \leftarrow \theta_s - \eta \nabla_{\theta_s} \mathcal{L}$; \mbox{   } $\theta_t \leftarrow \lambda \theta_t + (1 - \lambda) \theta_s$ 
            \ENDWHILE
\end{algorithmic}
\end{algorithm}
    
\noindent
\textbf{Channel-wise data augmentation.} 
Finally, we introduce structured channel dropout as a data augmentation strategy to improve diversity and robustness. During alignment training, both EEG and music signals undergo random resized cropping and Gaussian perturbation, while the EEG branch additionally applies channel dropout. Because each electrode is represented as an explicit token, dropping channels removes structured spatial measurements rather than arbitrary feature dimensions. This encourages the model to infer consistent music-aligned representations from varying electrode subsets, reducing reliance on any fixed montage and improving robustness to noise, subject variability, and local artifacts. Algorithm~\ref{alg:eeg_augment} outlines the EEG augmentation.
    
\begin{algorithm}[h]
\caption{Channel-wise data augmentation.}
\label{alg:eeg_augment}
\small
\begin{algorithmic}[1]
        \REQUIRE data $x \in \mathbb{R}^{C \times L}$, crop scale $p$, noise level $\sigma$
        \ENSURE Augmented time series $x' \in \mathbb{R}^{C' \times L}$
        \STATE $x \gets ChannelDrop(x)$ 
        \STATE $L_c \gets \lfloor pL \rfloor$, \quad $i \sim \mathcal{U}(0,L-L_c)$
        \STATE $\tilde{x} \gets \mathrm{Interpolate}(x[:,i:i+L_c], \mathrm{size}=L)$
        \STATE \textbf{return} $x' \gets \tilde{x} + \epsilon$, \quad $\epsilon \sim \mathcal{N}(0,\sigma^2 I)$
\end{algorithmic}
\end{algorithm}

\subsection{Music Encoder, Alignment, and Reconstruction}
    
We integrate the proposed channel-oriented design into a complete EEG-to-music pipeline consisting of three stages: encoding, cross-modal alignment, and music reconstruction.    

\noindent
\textbf{Music encoder.} 
We adopt the pretrained CLAP audio encoder from AudioLDM~\citep{clap,audioldm} to extract semantic music embeddings. Audio is segmented according to the EEG windows, encoded by frozen CLAP, and projected into the shared alignment space. Keeping CLAP frozen gives a stable target for EEG alignment and matches the conditioning space used by the downstream AudioLDM decoder.
    
\noindent
\textbf{Alignment.} 
Given pretrained EEG and audio encoders, we align the two modalities with a CLIP-style contrastive objective. EEG token embeddings are aggregated and projected through a temporal convolution followed by linear layers. For a batch of paired EEG and audio segments with batch size $B$ and embedding dimension $D$, we obtain $Z_{eeg}\in\mathbb{R}^{B\times D}$ and $Z_{audio}\in\mathbb{R}^{B\times D}$, and optimize a cross-entropy loss over the similarity matrix $Z_{eeg}Z_{audio}^\top\in\mathbb{R}^{B\times B}$. This objective encourages matched EEG and audio pairs to be closer than mismatched pairs, while EEG channel dropout makes this alignment robust to missing or noisy electrodes. 
    
\noindent
\textbf{Music decoder.} 
To evaluate alignment quality and enable waveform reconstruction, we employ a pretrained diffusion-based audio generator, AudioLDM~\citep{audioldm}, without fine-tuning. A lightweight ridge regression adapter maps aligned EEG embeddings into the CLAP space expected by the decoder. At test time, the adapted EEG embedding conditions AudioLDM to generate audio. Because the generator is fixed and the adapter is linear, performance gains primarily reflect improvements in EEG representation and alignment, rather than decoder capacity. Figure~\ref{fig:recon} outlines this reconstruction pipeline.
    
\begin{figure}[t]
\centering
\includegraphics[width=0.65\linewidth,height=1.5in]{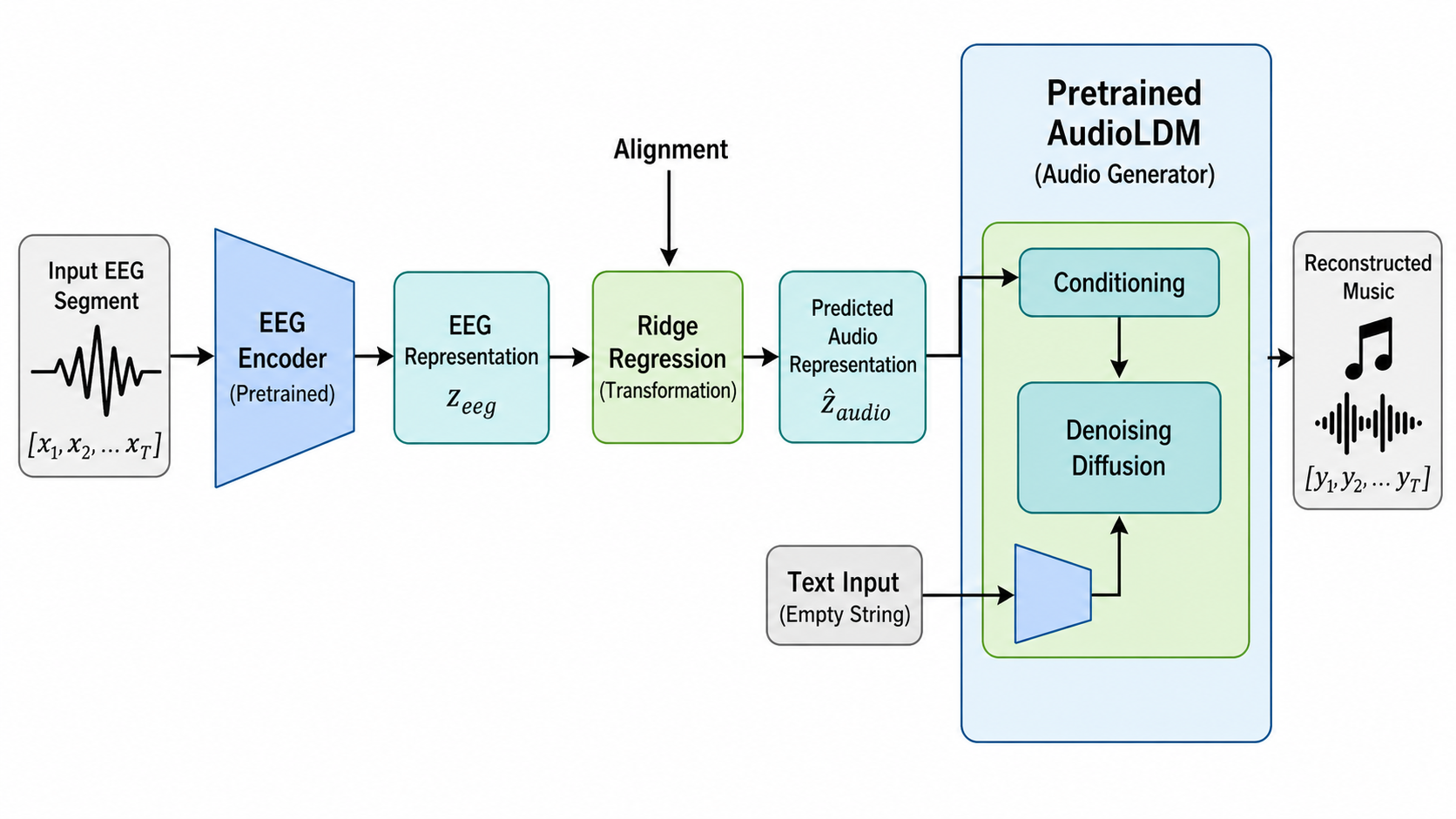}
\caption{The reconstruction pipeline. After alignment, EEG representations are mapped to the CLAP audio embedding space using ridge regression. The transformed embedding is then used to condition AudioLDM for music reconstruction.}
\label{fig:recon}
\end{figure}

\section{Theoretical Analysis}
\label{sec:identifiability}

We conduct a theoretical analysis of channel masking, through its induced augmentation graph, to provide a principled understanding of the proposed design. We begin by defining the augmentation kernel and a notion of normalized cross-cluster overlap. We then characterize the difference between channel-wise and block-wise masking through a simple reweighting identity in Lemma~\ref{lem:reweight}. Building on this result, we show that, under an interpretable covariance condition, channel-wise masking reduces cross-cluster overlap in Theorem~\ref{thm:strict_overlap}. We give the proofs and additional discussions in Appendix~\ref{app:proofs}.

Let $\bar{x}\in\mathcal{X}\subseteq\mathbb{R}^N$ denote a channel-level representation of one EEG window, where $N$ is the number of channel-level coordinates being masked. Let $p$ denote the corresponding EEG data distribution. We write $\mathcal{A}$ for a masking augmentation distribution, with $\mathcal{A}_{\mathrm{cha}}$ denoting channel-wise masking and $\mathcal{A}_{\mathrm{blk}}$ denoting block-wise masking. Let $y:\mathcal{X}\to\mathcal{Y}$ denote a discrete music-induced cluster label, such as an idealized song, genre, or semantic music cluster associated with an EEG window. Channel-wise masking samples each channel independently with retention probability $\rho$. Block masking partitions the channels into $K$ disjoint blocks $B_1,\ldots,B_K$, and samples one shared mask per block. For a pair $(\bar{x},\bar{x}')$, let $D(\bar{x},\bar{x}')=\{i \mid \bar{x}_i\ne\bar{x}'_i\}$, and $\mathcal{B}(D)=\{k \mid B_k\cap D\ne\emptyset\}$. Define $S(D) = |D|-|\mathcal{B}(D)| = \sum_k \max\!\big(0, |D\cap B_k|-1\big)$. This quantity is large when the differences concentrate in a few blocks. Let $K_{\mathcal{A}}$ denote the augmentation overlap kernel induced by $\mathcal{A}$, and $\mathbf{1}[E]$ denote the indicator of an event $E$, which equals one when $E$ is true and zero otherwise. We define the normalized cross cluster overlap as
\begin{equation}
r_y(\mathcal{A}) =
\frac{\mathbb{E}_{p^{\otimes 2}}\!\big[\mathbf{1}[y(\bar{x})\ne y(\bar{x}')]K_{\mathcal{A}}(\bar{x},\bar{x}')\big]}
{\mathbb{E}_{p^{\otimes 2}}\!\big[K_{\mathcal{A}}(\bar{x},\bar{x}')\big]} .\label{eq:psi_z}
\end{equation}
A smaller value of $r_y$ indicates that the augmentation graph places less mass across music-induced cluster boundaries.

The next lemma characterizes the difference between channel-wise and block-wise masking. Let $a=\rho^2+(1-\rho)^2$ and $b=(1-\rho)^2/a$. We note that the constant $a^{N-K}$ cancels in $r_y$, while the only data-dependent term is $b^{S(D)}$, which downweights the pairs whose differences are concentrated in blocks.

\begin{lemma}[Reweighting identity]
\label{lem:reweight}
Under some standard masking assumptions given in Appendix~\ref{app:proof_reweight}, 
\begin{equation*} \label{eq:reweight}
\frac{K_{\mathrm{cha}}(\bar{x},\bar{x}')}{K_{\mathrm{blk}}(\bar{x},\bar{x}')} = a^{N-K}\,b^{S(D)}, \qquad b\in(0,1).
\end{equation*}
\end{lemma}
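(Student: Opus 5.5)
The plan is to write both kernels in closed form as products over independent mask units, and then divide. I will use the masking assumptions of Appendix~\ref{app:proof_reweight} in the following form.

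\begin{enumerate}[(i)]
\item $K_{\mathcal{A}}(\bar{x},\bar{x}')$ is the probability that two independently drawn masks $m,m'\sim\mathcal{A}$ produce identical masked views, i.e.\ $m\odot\bar{x}$ coincides with $m'\odot\bar{x}'$. A dropped coordinate is replaced by a mask token that is distinguishable from any observed value.
\item The retention probability satisfies $\rho\in(0,1)$.
\item Under block masking every coordinate of a block $B_k$ shares the block's mask, and each block is retained with probability $\rho$.
\end{enumerate}

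If the appendix instead uses a single shared mask, the same factorization argument applies with the per-unit probabilities changed accordingly. The key observation is that the event of identical views factorizes over independent mask units. A unit (a channel, or a block) contributes as follows:
\begin{itemize}
\item If the unit contains no coordinate of $D$, the two views agree on it exactly when the two masks agree on it. Both keep or both drop, which has probability $a=\rho^2+(1-\rho)^2$.
\item If the unit contains at least one coordinate of $D$, the views agree only if both masks drop it, because a retained differing value, or a token facing a value, gives a mismatch. This has probability $(1-\rho)^2$.
\end{itemize}

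\textbf{Step 1 (channel-wise kernel).} Each of the $N$ channels is an independent unit, and exactly $|D|$ of them are differing. Hence
\[
K_{\mathrm{cha}}(\bar{x},\bar{x}')=a^{N-|D|}(1-\rho)^{2|D|}.
\]

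\textbf{Step 2 (block kernel).} The units are now the $K$ blocks, and a block is differing exactly when $k\in\mathcal{B}(D)$. This gives
\[
K_{\mathrm{blk}}(\bar{x},\bar{x}')=a^{K-|\mathcal{B}(D)|}(1-\rho)^{2|\mathcal{B}(D)|}.
\]
This step relies on the shared-mask assumption (iii): a block containing several differing coordinates costs the factor $(1-\rho)^2$ only once.

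\textbf{Step 3 (ratio).} Both kernels are strictly positive when $\rho<1$, so the ratio is well defined. Dividing and regrouping exponents,
\[
\frac{K_{\mathrm{cha}}}{K_{\mathrm{blk}}}
=a^{N-K-(|D|-|\mathcal{B}(D)|)}(1-\rho)^{2(|D|-|\mathcal{B}(D)|)}
=a^{N-K}\,b^{S(D)},
\]
where $S(D)=|D|-|\mathcal{B}(D)|$ and $b=(1-\rho)^2/a$. The identity $S(D)=\sum_k\max(0,|D\cap B_k|-1)$ holds because each block meeting $D$ contributes $|D\cap B_k|-1$, and each other block contributes $0$.

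\textbf{Step 4 ($b\in(0,1)$).} Since $\rho\in(0,1)$, we have $(1-\rho)^2>0$, so $b>0$. Also $a=(1-\rho)^2+\rho^2>(1-\rho)^2$ because $\rho^2>0$, so $b<1$.

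The only real obstacle is pinning down exactly which kernel definition and masking model the appendix assumes. In particular, the argument needs two things:
\begin{itemize}
\item A masked coordinate must never match an observed value, so that differing kept coordinates and token-versus-value comparisons are always mismatches.
\item Masks must be independent across channels (respectively blocks) and across the two views.
\end{itemize}
Once these are fixed, the result is the product computation above.
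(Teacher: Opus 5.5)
Your proof is correct and is essentially the paper's argument: your model (two independent masks yield identical views, with a distinguishable mask token) is exactly the paper's kernel $K_{\mathcal{A}}(\bar{x},\bar{x}')=\sum_M\pi(M)^2\,\mathbf{1}[M_D=0]$, where the view $v=(M,M\odot\bar{x})$ carries the mask itself and so forces the two masks to coincide, and your per-channel and per-block product formulas, the division, and the check $b\in(0,1)$ match the appendix step for step. One small caveat on your hedge: a single shared mask would give the ratio $(1-\rho)^{S(D)}$ rather than $a^{N-K}b^{S(D)}$, so the stated constants require the independent-draws model, which is the one the paper uses.
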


Let $Q$ denote the probability measure on $\mathcal{X}\times\mathcal{X}$ with density proportional to $p(\bar{x})\,p(\bar{x}')\,K_{\mathrm{blk}}(\bar{x},\bar{x}')$, and write $Y(\bar{x},\bar{x}')=\mathbf{1}[y(\bar{x})\ne y(\bar{x}')]$. The next theorem shows that channel-wise masking yields strictly smaller normalized cross-cluster overlap than block masking. 

\begin{theorem}[Per-channel masking strictly reduces cross cluster overlap]
\label{thm:strict_overlap}
Suppose the following covariance condition holds: 
\begin{equation} \label{eq:cov_condition}
\operatorname{Cov}_{Q}\!\big(Y,b^{S(D)}\big)<0.
\end{equation}
Then, $r_y(\mathcal{A}_{\mathrm{cha}}) < r_y(\mathcal{A}_{\mathrm{blk}})$. 
\end{theorem}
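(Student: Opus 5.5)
The plan is to rewrite both overlap ratios as expectations under the single reference measure $Q$, using Lemma~\ref{lem:reweight} to express $K_{\mathrm{cha}}$ through $K_{\mathrm{blk}}$. By definition of $Q$, for any integrable function $g$ on $\mathcal{X}\times\mathcal{X}$ we have
\begin{equation*}
\mathbb{E}_{p^{\otimes 2}}\big[g\,K_{\mathrm{blk}}\big] = Z\,\mathbb{E}_Q[g], \qquad Z=\mathbb{E}_{p^{\otimes 2}}[K_{\mathrm{blk}}],
\end{equation*}
where we assume $0<Z<\infty$; this is needed for $Q$ and $r_y(\mathcal{A}_{\mathrm{blk}})$ to be well defined. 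Taking $g=Y$ gives $r_y(\mathcal{A}_{\mathrm{blk}})=\mathbb{E}_Q[Y]$ immediately.

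For the channel-wise ratio, I substitute $K_{\mathrm{cha}}=a^{N-K}b^{S(D)}K_{\mathrm{blk}}$ from Lemma~\ref{lem:reweight} into both the numerator and the denominator of \eqref{eq:psi_z}. The constant $a^{N-K}$ and the normalizer $Z$ cancel, leaving
\begin{equation*}
r_y(\mathcal{A}_{\mathrm{cha}})=\frac{\mathbb{E}_Q\big[Y\,b^{S(D)}\big]}{\mathbb{E}_Q\big[b^{S(D)}\big]}.
\end{equation*}
The denominator is strictly positive because $b\in(0,1)$ and $S(D)\ge 0$ is finite (at most $N$), so $b^{S(D)}\in(0,1]$. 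Everything is bounded, so all expectations are finite.

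Next I expand $\mathbb{E}_Q[Y b^{S}]=\operatorname{Cov}_Q(Y,b^{S})+\mathbb{E}_Q[Y]\,\mathbb{E}_Q[b^{S}]$, writing $S=S(D)$. Dividing by $\mathbb{E}_Q[b^{S}]>0$ gives
\begin{equation*}
r_y(\mathcal{A}_{\mathrm{cha}}) = r_y(\mathcal{A}_{\mathrm{blk}}) + \frac{\operatorname{Cov}_Q(Y,b^{S(D)})}{\mathbb{E}_Q[b^{S(D)}]}.
\end{equation*}
By \eqref{eq:cov_condition} the second term is strictly negative, which gives the strict inequality.

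The only delicate step is justifying the change of measure. The identity from Lemma~\ref{lem:reweight} must hold $p^{\otimes 2}$-almost everywhere on the support of $K_{\mathrm{blk}}$. Where $K_{\mathrm{blk}}=0$, I also need $K_{\mathrm{cha}}=0$, so that the channel-wise integrals carry no mass outside the support of $Q$. I expect the masking assumptions in Appendix~\ref{app:proof_reweight} to supply exactly this, since the ratio is stated as finite and positive. If they do not, I would add it as an explicit absolute-continuity hypothesis.
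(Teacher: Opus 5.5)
Your proposal is correct and matches the paper's proof: both ratios are rewritten as expectations under $Q$, the constant $a^{N-K}$ cancels via Lemma~\ref{lem:reweight}, and one obtains $r_y(\mathcal{A}_{\mathrm{cha}})-r_y(\mathcal{A}_{\mathrm{blk}})=\operatorname{Cov}_Q(Y,b^{S(D)})/\mathbb{E}_Q[b^{S(D)}]$ with a strictly positive denominator. The support issue you flag does not arise, because the lemma's proof gives $K_{\mathrm{blk}}(\bar{x},\bar{x}')=(1-\rho)^{2|\mathcal{B}(D)|}a^{K-|\mathcal{B}(D)|}>0$ for every pair when $0<\rho<1$; hence no extra absolute-continuity hypothesis is needed, and $0<Z\le 1$ holds automatically.
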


We make some remarks. First, the covariance condition in \eqref{eq:cov_condition} is a structural assumption, rather than a universal property of EEG. It describes a regime in which discriminative EEG variation is concentrated within anatomically or functionally related channel groups, so that block masking may jointly remove multiple informative channels, whereas per-channel masking can still preserve part of the signal. This assumption is biologically plausible, as music- and emotion-related EEG responses are often reported in coherent frontal, temporal, and parietal channel groups \citep{schmidt2001frontal,lin2010electroencephalographic,rogenmoser2016independent,zhang2016relieff,garg2022decoding}. Next, our analysis adopts two simplifications. One concerns the difference set $D(\bar{x},\bar{x}')$, which is defined on channel-level states and, for continuous EEG, should be interpreted through quantized, denoised, or tolerance-based similarity. The other concerns the mapping $y$, which is treated as a discrete music-induced cluster label, whereas the actual pipeline aligns EEG to continuous CLAP embeddings. This remains a reasonable coarse abstraction because song identity, genre, and semantic neighborhoods can be viewed as cluster-like summaries of the continuous audio embedding space. Finally, we do not view Theorem~\ref{thm:strict_overlap} as a rigorous guarantee for the full pipeline, but as a theoretical understanding supporting the proposed channel-oriented design.

\section{Numerical Experiments}

\subsection{Datasets}
\label{sec:data_split}

We use the Naturalistic Music EEG Dataset Tempo version (NMED-T)~\citep{nmedt} and the Hindi version (NMED-H)~\citep{nmedh}, two most commonly used benchmark datasets for EEG-to-music task. Both are collected and preprocessed under a common protocol, yielding 125-channel EEG signals at 150 Hz, band-pass filtered (0.3-50 Hz), with artifacts removed and per-channel z-score normalization applied. NMED-T includes 20 subjects listening to 10 pop songs (4-5 minutes each). NMED-H includes 48 subjects listening to 4 Indian pop songs. For NMED-H, there are augmented variants that expand the stimuli to 16 versions, though we use only the original and reverse versions to avoid data leakage. In total, the combined dataset contains 29.4 hours of recordings from 68 subjects. We perform experiments on the combined dataset using a random 95/5 split over EEG-music segments. We adopt a within-subject split, because subjects are not exposed to the same set of songs. A cross-subject split would otherwise confound subjects with songs. The alignment dimension is set to $D=512$.

\subsection{Baseline Methods and Evaluation Metrics}

We compare with four groups of baseline methods.

\noindent
\textbf{Linear EEG Reference.}
We fit a linear ridge regression from raw EEG features to CLAP music embeddings and use the same AudioLDM reconstruction pipeline. This is a naive EEG-based reference method that isolates the contribution of learned EEG representations and alignment. 

\noindent
\textbf{Audio Reconstruction Reference.}
We feed the ground-truth music directly into AudioLDM. This reference method measures the degradation introduced by the fixed generator without any EEG prediction, and therefore serves as an oracle-style upper bound for the reconstruction pipeline. Since it bypasses EEG decoding and uses the true audio, EEG-based methods are not expected to outperform it on semantic reconstruction or identification metrics.

\noindent
\textbf{EEG2Mel.}
EEG2Mel \citep{eeg2mel} is the closest existing method that directly maps EEG to mel spectrograms and tests whether explicit channel-oriented alignment improves semantic reconstruction. We follow the original settings in \citep{eeg2mel} to ensure a faithful reproduction of the method. 

\noindent
\textbf{State-of-the-art EEG foundation models.}
We compare against LaBraM~\citep{labram}, EEGPT~\citep{wang2024eegpt}, and CBraMod~\citep{wang2024cbramod}, integrating each encoder into the same alignment and reconstruction pipeline so that differences primarily reflect EEG representations. These baseline methods incorporate related elements, such as channel tokenization, multi-view pretraining, or large-scale pretraining, but none combines all components as our channel-oriented design. We do not include NeuroGPT, EEG-DINO, or EEGMamba in comparison, as they either lack publicly available pretrained weights or have been reported to underperform LaBraM. We also note that, although these foundation models have been pretrained on much larger EEG corpora for longer durations, scale alone does not guarantee better performance on our task. Prior work shows that large-scale pretraining can accelerate convergence, but not necessarily improve final task accuracy, especially under domain mismatch \citep{he2019rethinking,raghu2019transfusion}. In contrast, our encoder is pretrained only on the much smaller dataset NMED, but uses a self-distillation objective tailored to naturalistic music listening, together with channel-wise tokenization and dropout. This yields a more cost-effective pipeline that remains well adapted to the target alignment problem.

\noindent
\textbf{Evaluation Metrics.}
Our primary evaluation metrics include the 50-way identification task, the 14-way identification task, and the CLAP score. The first two identification tasks evaluate EEG-music alignment before waveform generation. In the 50-way task, each EEG embedding retrieves its paired music embedding from 50 song pieces using cosine similarity, providing a fine-grained test of discrimination. In the 14-way task, it follows a similar setup but over 14 entire songs, offering a simpler and more interpretable song-level measure. The CLAP score evaluates reconstructed audio by measuring cosine similarity between generated and ground-truth embeddings from a pretrained CLAP model, capturing semantic fidelity in the same representation space used by the decoder. Our secondary metric is 10-way song genre classification accuracy, which assesses whether reconstructed audio preserves broad musical style. While perceptually meaningful, it is less discriminative than identification tasks, as multiple songs may share the same genre. Finally, we also report SSIM and PSNR as reference spectrogram metrics. These pixel-level measures can favor smooth or spectrogram-like outputs without reflecting perceptual or semantic quality, and may penalize valid diffusion outputs with different local structures. Therefore, we report them only for completeness, but do not treat them as primary metrics.

\subsection{Main Results}

Table~\ref{tab:res} reports the numerical results, and we make the following observations. 

\begin{table}[b]
\centering
\caption{Reconstruction and alignment evaluation with the benchmark NMED-T and NMED-H datasets. }
\label{tab:res}
\newcommand{\metricstd}[2]{#1{\scriptsize $\pm$ #2}}
\newcommand{\metricstdb}[2]{\textbf{#1}{\scriptsize $\pm$ #2}}
\newcommand{\resultsTableWidth}{1\linewidth}
\setlength{\tabcolsep}{4pt}
\resizebox{\textwidth}{!}{
\begin{tabular}{l l c c c c c c}
\toprule
& & \multicolumn{4}{c}{\textbf{Audio-level} $\uparrow$} & \multicolumn{2}{c}{\textbf{Embedding-level} $\uparrow$} \\
\cmidrule(lr){3-6}\cmidrule(lr){7-8}
\textbf{Role} & \textbf{Method} & {\color{gray}SSIM} & {\color{gray}PSNR} & CLAP & \makecell{10-way \\ genre} & \makecell{50-way \\ id.} & \makecell{14-way \\ name} \\
\midrule
\multirow{2}{*}{Reference}
& Linear EEG Reference                         & {\color{gray}0.491} & {\color{gray}14.32} & 0.576 & \metricstd{0.126}{0.067} & \metricstd{0.018}{0.057} & \metricstd{0.067}{0.057} \\
& Audio Reconstruction Reference               & {\color{gray}0.416} & {\color{gray}13.26} & 0.752 & \metricstd{0.276}{0.060} & \metricstd{0.598}{0.069} & \metricstd{0.775}{0.061} \\
\midrule
\multirow{4}{*}{Comparison}
& EEG2Mel                                      & {\color{gray}0.762} & {\color{gray}24.37} & 0.588 & \metricstd{0.132}{0.051} & \metricstd{0.259}{0.050} & \metricstd{0.478}{0.062} \\
& LaBraM~\citep{labram}                        & {\color{gray}0.422} & {\color{gray}14.21} & 0.657 & \metricstd{0.162}{0.056} & \metricstd{0.380}{0.057} & \metricstd{0.681}{0.051} \\
& EEGPT~\citep{wang2024eegpt}                  & {\color{gray}0.378} & {\color{gray}14.34} & 0.625 & \metricstd{0.141}{0.065} & \metricstd{0.326}{0.069} & \metricstd{0.643}{0.068} \\
& CBraMod~\citep{wang2024cbramod}              & {\color{gray}0.418} & {\color{gray}14.42} & 0.641 & \metricstd{0.169}{0.059}  & \metricstd{0.402}{0.058} & \metricstd{0.690}{0.064} \\
\midrule
Proposed
& \textbf{Ours}                                & {\color{gray}0.488} & {\color{gray}14.41} & \textbf{0.683} & \metricstdb{0.203}{0.055} & \metricstdb{0.487}{0.067} & \metricstdb{0.692}{0.062} \\
\bottomrule
\end{tabular}
}
\end{table}

\noindent
\textbf{The main gain is the improved semantic reconstruction and EEG-music alignment.}
Among the EEG-based methods, our approach achieves the best CLAP score, 10-way genre accuracy, 50-way identification, and 14-way identification. Because all learned EEG methods share the same ridge adapter and AudioLDM generator, these gains primarily reflect stronger EEG representations and a better-aligned EEG-music embedding space, rather than differences in decoder capacity. EEG2Mel performs poorly because it is trained to predict local mel spectrograms rather than to align EEG with semantic audio embeddings. This limitation is most evident in embedding-level identification, where success depends on capturing song-specific semantics rather than producing a locally plausible spectrogram.

\noindent
\textbf{The strongest evidence comes from embedding-level identification.}
The 50-way task is the most discriminative, requiring each EEG embedding to retrieve its paired music embedding from many candidates before waveform generation. Our model achieves 0.487, improving by 0.085 over CBraMod, the strongest foundation model baseline, and by 0.228 over EEG2Mel. The 14-way task is easier, where our score of 0.692 is comparable to CBraMod's 0.690. Thus the 50-way result shows the clearest evidence that channel-oriented representations improve fine-grained EEG-music alignment. 

\noindent
\textbf{Spectrogram metrics do not reliably reflect perceptual or semantic quality in this setting.}
EEG2Mel achieves higher SSIM and PSNR because it directly regresses mel spectrograms, yet it performs poorly on CLAP and the two identification tasks. Similar to our method, the oracle Audio Reconstruction Reference only yields modest SSIM and PSNR despite using ground-truth music as input. These results indicate that pixel-level spectrogram distances like SSIM and PSNR are not good indicators for diffusion-based audio reconstruction and should not be treated as primary metrics.

\subsection{Ablation Study}\label{sec:ablation}

Table~\ref{tab:ablation} presents the ablation results, isolating each component of the channel-oriented design under the same data split and reconstruction pipeline as the full model. We report only the 50-way and 14-way identification metrics, as they directly assess EEG-music alignment before waveform generation. The full model serves as the reference, with scores of 0.487 and 0.692.

\begin{table}[t]
\centering
\caption{Ablation studies for isolating individual component effects of the channel-oriented design.}
\label{tab:ablation}
\small
\newcommand{\ablationstd}[2]{#1{\scriptsize $\pm$ #2}}
\newcommand{\ablationstdb}[2]{\textbf{#1}{\scriptsize $\pm$ #2}}
\newcommand{\ablationTableWidth}{0.7\linewidth}
\setlength{\tabcolsep}{5pt}
\resizebox{\ablationTableWidth}{!}{%
\begin{tabular}{l l c c}
\toprule
\textbf{Component} & \textbf{Variant} & \textbf{50-way} $\uparrow$ & \textbf{14-way} $\uparrow$ \\
\midrule
\textbf{Full model} & \textbf{Ours} & \ablationstdb{0.487}{0.053} & \ablationstdb{0.692}{0.067} \\
\midrule
Tokenization
    & Block tokenization & \ablationstd{0.141}{0.064} & \ablationstd{0.406}{0.062} \\
\midrule
Pretraining
    & No multi-view pretraining               & \ablationstd{0.050}{0.067} & \ablationstd{0.155}{0.066} \\
\midrule
Channel dropout
    & No channel dropout & \ablationstd{0.411}{0.067} & \ablationstd{0.598}{0.054} \\
\midrule
\multirow{3}{*}{Temporal modeling}
    & Linear head, encoder fixed       & \ablationstd{0.092}{0.065} & \ablationstd{0.191}{0.061} \\
    & Linear head, encoder fine-tuned  & \ablationstd{0.422}{0.051} & \ablationstd{0.612}{0.063} \\
    & Encoder fixed during alignment & \ablationstd{0.179}{0.062} & \ablationstd{0.425}{0.063} \\
\bottomrule
\end{tabular}
}
\end{table}

\noindent
\textbf{Channel-wise tokenization is the largest contributor.}
Replacing per-electrode tokens with block tokens that group $g=5$ contiguous electrodes reduces performance from 0.487 to 0.141 on the 50-way task and from 0.692 to 0.406 on the 14-way task, the largest drop in the ablation study. This strongly supports the first component of the channel-oriented design: preserving electrode identity is far more effective than early mixing of nearby channels before the model learns which electrodes carry music-relevant information. It also aligns with Theorem~\ref{thm:strict_overlap}, which predicts an advantage for channel-level masking when informative variation is distributed within coarse channel blocks.

\noindent
\textbf{Channel-wise self-distillation enables effective use of the token representation.}
Training the same encoder from random initialization reduces performance from 0.487 to 0.050 on the 50-way task and from 0.692 to 0.155 on the 14-way task, showing that channel-wise tokenization alone is insufficient. This supports the second component of the design: multi-view self-distillation learns stable temporal and cross-channel structure under varying temporal crops and channel subsets before paired EEG-music alignment. In this sense, tokenization provides the appropriate representation interface, while self-distillation learns how to aggregate and stabilize the distributed neural signals across channels.

\noindent
\textbf{Channel-wise augmentation enhances robustness to missing or incomplete signals.}
Removing channel dropout reduces performance from 0.487 to 0.411 on the 50-way task and from 0.692 to 0.598 on the 14-way task. Although smaller than the effects of tokenization and pretraining, the drop is consistent across both metrics. This supports the third component of the design: random channel masking encourages the model to rely on distributed evidence rather than a fixed subset of electrodes.

\noindent
\textbf{The alignment head and encoder adaptation support the three core components.}
The temporal modeling ablations show how channel-oriented EEG tokens should be mapped to the music embedding space. With a fixed encoder and a linear head, the performance is poor at 0.092 and 0.191. Fine-tuning the encoder with the same linear head improves results to 0.422 and 0.612, indicating the importance of adaptation during alignment, but still falls short of the full temporal head, which reaches 0.487 and 0.692. Freezing the encoder during alignment also degrades the performance, yielding 0.179 and 0.425. Overall, these results show that the channel-oriented components are most effective when the alignment stage can jointly adapt the encoder and capture temporal structure.

\subsection{Interpretability Analysis}
\label{sec:interpretability}

Channel-wise tokenization makes per-electrode importance directly interpretable through attention weights. Specifically, the attention from the CLS token to channel tokens in the final transformer layer provides a per-channel score, which we average over heads, time windows, and subjects to obtain song-level scalp maps. In contrast, coarse block tokens would mix channels with different functional roles, obscuring this resolution. Figure \ref{fig:interpretability} reports the per-channel patterns. 

\begin{figure}[t]
\centering
\includegraphics[width=0.9\linewidth,height=2.35in]{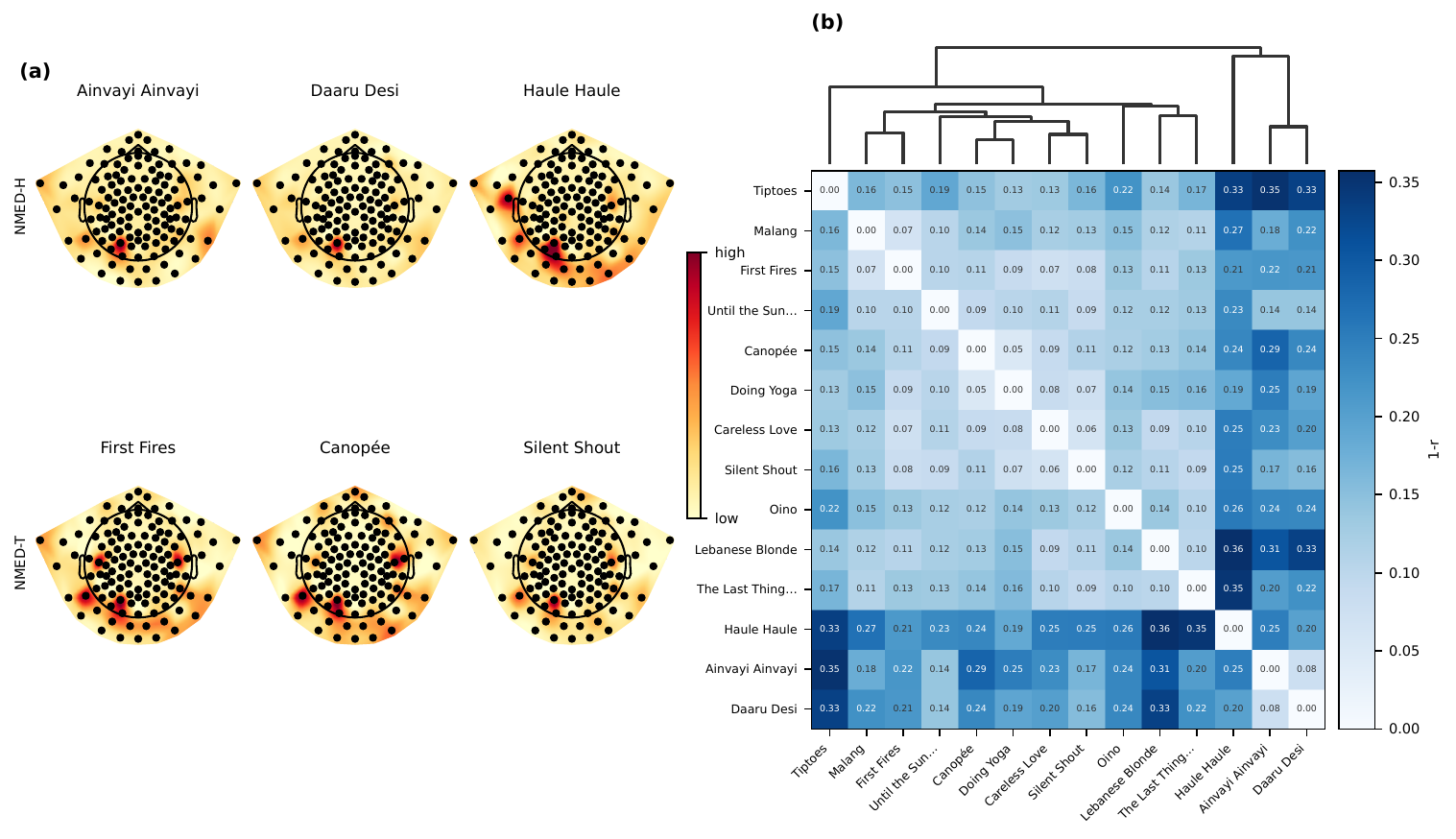}
\caption{Channel attention patterns across songs. (a) Per-song maps for six representative songs from NMED-H (top) and NMED-T (bottom), plotted on the 10-10 scalp layout. (b) Pairwise dissimilarity between the 14 per-song maps as 1 - Pearson correlation.}
\label{fig:interpretability}
\end{figure}

\noindent
\textbf{Posterior temporal attention reflects rhythmic auditory structure.} 
Figure \ref{fig:interpretability}(a) shows the highest attention weights around lateral posterior temporal and temporo-occipital electrodes, with several songs exhibiting left-lateralized peaks rather than consistent right-hemisphere dominance. This pattern aligns with prior findings of bilateral auditory and superior temporal involvement during naturalistic music processing~\citep{koelsch2014brain,alluri2012large}. The leftward peaks are also plausible, as alignment may rely on temporal features such as envelope, onset, and beat, particularly in the NMED-H stimuli, which contain strong rhythmic and percussive elements. Consistent with this, EEG responses to natural music track temporal structure and beat \citep{nmedh}, and auditory temporal processing can show left-hemisphere preference \citep{zatorre2001spectral}. As these maps reflect sensor-level attention rather than source-localized neural response, we interpret the lateralization cautiously as song-dependent posterior auditory involvement rather than definitive hemispheric specialization.

\noindent
\textbf{Dataset-level clustering reflects familiarity and enculturation effects.} 
Figure \ref{fig:interpretability}(b) shows the dissimilarity matrix where the three NMED-H songs form a distinct cluster, with distances above 0.20 to most NMED-T songs, while many within-NMED-T distances remain between 0.05 and 0.15. NMED-H consists of Hindi pop and NMED-T of Western pop, with a largely Western listener pool in both datasets. This separation is consistent with findings that culturally unfamiliar music elicits different neural and behavioral responses~\citep{demorest2008lost}, and that familiar music engages distinct neural processing~\citep{pereira2011music}. These results suggest that the encoder captures stimulus features related to cultural familiarity beyond low-level acoustics.

\section{Discussions}

In this article, we show that improving EEG representation learning through channel-wise tokenization, self-distillation, and augmentation leads to much improved EEG-music alignment and semantic reconstruction. Our proposed channel-oriented design is both cost-effective and interpretable: it achieves strong performance with lightweight alignment and pretrained decoding components, while preserving per-channel representations that enable anatomically meaningful attention analysis. Together, the theoretical and empirical results support the central claim that representation design is crucial for decoding weak, distributed, and noisy EEG signals from music stimuli.

\paragraph{Limitations and Broader Impacts.}
The current study is limited by the size and scope of available EEG-music benchmark datasets, which contain only 29.4 hours of recordings and focus exclusively on music listening. Extending the framework to speech or environmental sounds would require larger paired datasets. The evaluation uses a within-subject split because NMED is not fully crossed across subjects and stimuli. Finally, the reconstruction stage intentionally uses a simple ridge adapter to isolate the effect of the proposed channel-oriented representation, while more expressive adapters and stronger music encoders remain important directions for future work. While our work may support assistive neurotechnology and the study of naturalistic perception, future applications should require informed consent, transparent data governance, and safeguards against covert or nonconsensual use.

\newpage
\bibliographystyle{plainnat}
\bibliography{paper}

\newpage
\appendix
\section{Proofs}
\label{app:proofs}

\subsection{Proof of Lemma~\ref{lem:reweight}}
\label{app:proof_reweight}

This lemma requires the following assumptions for masking-induced augmentation kernels, all of which are standard in the literature: (i) Each EEG window is represented by a channel-level vector $\bar{x}\in\mathcal{X}\subseteq\mathbb{R}^N$, and a masked view is $v=(M,M\odot\bar{x})$; (ii) The augmentation distribution factors as $\mathcal{A}(v\mid\bar{x})=\pi(M)\,\delta(w-M\odot\bar{x})$ for $v=(M,w)$. Therefore two inputs overlap only when their retained coordinates agree; (iii) For channel-wise masking, the entries of $M$ are independent Bernoulli$(\rho)$ variables. For block-wise masking, the $K$ block states are independent Bernoulli$(\rho)$ variables and all coordinates in the same block share the same state; and (iv) The retention probability satisfies $0<\rho<1$, so the masking process is non-degenerate.
\begin{proof}
Let a view be $v=(M, M\odot\bar{x})$. The augmentation density factors as $\mathcal{A}(v\mid\bar{x})=\pi(M)\,\delta\!\big(w - M\odot\bar{x}\big)$ for $v=(M,w)$. Integrating over the view forces $w=M\odot\bar{x}=M\odot\bar{x}'$, which collapses to the indicator $\mathbf{1}[M_{D(\bar{x},\bar{x}')}=0]$. The remaining sum over $M$ reads
\begin{equation*}
K_{\mathcal{A}}(\bar{x},\bar{x}') \;=\; \sum_{M}\pi(M)^2\,\mathbf{1}\!\big[M_{D}=0\big],
\end{equation*}
which is the overlap kernel used in the main text.

First, we consider the per-channel factorization. Under $\pi_{\mathrm{cha}}$, the components of $M$ are independent Bernoulli$(\rho)$, so $\pi_{\mathrm{cha}}(M)^2=\prod_i\rho^{2 M_i}(1-\rho)^{2(1-M_i)}$. The constraint $M_D=0$ fixes $M_i=0$ for every $i\in D$, and leaves $M_i$ free for every $i\notin D$. Factoring across-channels gives
\begin{align*}
K_{\mathrm{cha}}(\bar{x},\bar{x}') = \prod_{i\in D}(1-\rho)^2 \;\cdot\; \prod_{i\notin D}\big(\rho^2+(1-\rho)^2\big) = (1-\rho)^{2|D|}\,a^{N-|D|}.
\end{align*}

Next, we consider block factorization. Under $\pi_{\mathrm{blk}}$, mask states are constant within each block, and the $K$ block states are independent Bernoulli$(\rho)$. Writing $m_k\in\{0,1\}$ for the state of block $B_k$, $\pi_{\mathrm{blk}}(M)^2=\prod_k\rho^{2 m_k}(1-\rho)^{2(1-m_k)}$. The constraint $M_D=0$ requires $m_k=0$ for every $k\in\mathcal{B}(D)$, and leaves $m_k$ free for every $k\notin\mathcal{B}(D)$. Factoring across blocks gives
\begin{align*}
K_{\mathrm{blk}}(\bar{x},\bar{x}') = \prod_{k\in\mathcal{B}(D)}(1-\rho)^2 \;\cdot\; \prod_{k\notin\mathcal{B}(D)}\big(\rho^2+(1-\rho)^2\big) = (1-\rho)^{2|\mathcal{B}(D)|}\,a^{K-|\mathcal{B}(D)|}.
\end{align*}

Finally, we obtain the result about the ratio. Dividing the two expressions and using $|D|-|\mathcal{B}(D)|=S(D)$ gives, 
\begin{align*}
\frac{K_{\mathrm{cha}}(\bar{x},\bar{x}')}{K_{\mathrm{blk}}(\bar{x},\bar{x}')}
& = \; (1-\rho)^{2(|D|-|\mathcal{B}(D)|)}\,a^{(N-|D|)-(K-|\mathcal{B}(D)|)} \; = \; (1-\rho)^{2 S(D)}\,a^{(N-K)-S(D)} \nonumber\\
& = \; a^{N-K}\,\Big(\frac{(1-\rho)^2}{a}\Big)^{S(D)} \;=\; a^{N-K}\,b^{S(D)}.
\end{align*}
Since $0<\rho<1$, we have $(1-\rho)^2<\rho^2+(1-\rho)^2=a$, so $b=(1-\rho)^2/a\in(0,1)$.
\end{proof}

\subsection{Proof of Theorem~\ref{thm:strict_overlap}}
\label{app:proof_strict}

\begin{proof}
Let $\Psi_y(\mathcal{A})=\mathbb{E}_{p^{\otimes 2}}[Y\,K_{\mathcal{A}}]$ and $Z(\mathcal{A})=\mathbb{E}_{p^{\otimes 2}}[K_{\mathcal{A}}]$. Then $r_y(\mathcal{A})=\Psi_y(\mathcal{A})/Z(\mathcal{A})$.

By Lemma~\ref{lem:reweight}, $K_{\mathrm{cha}}(\bar{x},\bar{x}')=a^{N-K}\,K_{\mathrm{blk}}(\bar{x},\bar{x}')\,b^{S(D)}$ pointwise. The constant $a^{N-K}$ pulls out of every expectation and appears identically in $\Psi_y(\mathcal{A}_{\mathrm{cha}})$ and $Z(\mathcal{A}_{\mathrm{cha}})$, so it cancels in $r_y(\mathcal{A}_{\mathrm{cha}})$. Hence,
\begin{align*}
r_y(\mathcal{A}_{\mathrm{cha}}) = \frac{\mathbb{E}_{p^{\otimes 2}}[Y\,K_{\mathrm{blk}}\,b^{S}]}{\mathbb{E}_{p^{\otimes 2}}[K_{\mathrm{blk}}\,b^{S}]} = \frac{\mathbb{E}_{Q}[Y\,b^{S}]}{\mathbb{E}_{Q}[b^{S}]},
\end{align*}
where $Q$ is the probability measure on $\mathcal{X}\times\mathcal{X}$ with density proportional to $p(\bar{x})\,p(\bar{x}')\,K_{\mathrm{blk}}(\bar{x},\bar{x}')$. Similarly, 
\begin{equation*}
r_y(\mathcal{A}_{\mathrm{blk}})
= \frac{\mathbb{E}_{p^{\otimes 2}}[Y\,K_{\mathrm{blk}}]}{\mathbb{E}_{p^{\otimes 2}}[K_{\mathrm{blk}}]}
= \mathbb{E}_{Q}[Y].
\end{equation*}
The difference is
\begin{align*}
r_y(\mathcal{A}_{\mathrm{cha}}) - r_y(\mathcal{A}_{\mathrm{blk}})
= \frac{\mathbb{E}_{Q}[Y\,b^{S}] - \mathbb{E}_{Q}[Y]\,\mathbb{E}_{Q}[b^{S}]}{\mathbb{E}_{Q}[b^{S}]} = \frac{\operatorname{Cov}_{Q}(Y,\,b^{S})}{\mathbb{E}_{Q}[b^{S}]}.
\end{align*}
Because $b\in(0,1)$ and $S(D)\ge 0$, $b^{S(D)}\in(0,1]$ and $\mathbb{E}_{Q}[b^{S}]>0$. Therefore, $r_y(\mathcal{A}_{\mathrm{cha}})<r_y(\mathcal{A}_{\mathrm{blk}})$ holds if and only if $\operatorname{Cov}_{Q}(Y,b^{S})<0$, which is exactly \eqref{eq:cov_condition}.
\end{proof}

\section{More Details on Numerical Experiments}
\label{app:pretraining_details}

\subsection{Benchmark Data, Code and Licenses}
Our code will be publicly available upon publication, and generated music samples can be found at \url{https://eegmusic626-design.github.io/eegmusic/}.
The code and data used for the reconstruction component are publicly available at \url{https://github.com/haoheliu/AudioLDM}. AudioLDM is released under the Creative Commons Attribution NonCommercial ShareAlike 4.0 license. The NMED datasets and pretrained models used in this work are cited in the main text, and our use follows their public research terms.

\subsection{Evaluation Protocol}
\label{app:eval_details}

\noindent
\textbf{Data split and windowing.}
For pretraining, EEG is downsampled to 125 Hz and segmented with an 8-second window and a 6.4-second stride, giving 1000 samples per window and 800 samples per stride. Eight local views and two global views are used. For alignment, we use a 1-second window and a 1-second stride. We randomly split EEG-music segments into 95\% training and 5\% testing sets. 

\noindent
\textbf{Audio-level metrics.}
SSIM and PSNR measure mel spectrogram distance between reconstructed and groundtruth music. They are unreliable as primary metrics because diffusion-based generation can preserve perceptual content while changing phase, fine timing, or local spectral texture. Conversely, smoothed spectrogram predictions can score well while sounding less faithful. We therefore gray out these metrics in Table~\ref{tab:res}. The CLAP score is computed as cosine similarity between the reconstructed and ground-truth audio embeddings from a pretrained CLAP model~\citep{clap}. The 10-way genre identification task uses a pretrained classifier with blues, classical, country, disco, hip hop, jazz, metal, pop, reggae, and rock genres~\citep{music_class1}.

\noindent
\textbf{Embedding-level metrics.}
For 50-way identification, we sample 50 paired EEG-music embeddings, compute the EEG-music cosine similarity matrix in the alignment space, and count whether each EEG embedding has the largest similarity with its paired music embedding. The 14-way identification task uses the same rule, with batches formed from 14 different songs. Both tests are repeated 10 times.

\subsection{Baseline Method Implementations}
\label{app:comparison_details}

\noindent
\textbf{Shared alignment protocol for foundation-model baselines.}
For LaBraM, EEGPT, and CBraMod, we replace only the EEG encoder while keeping the rest of the alignment and reconstruction pipeline fixed. Each EEG backbone output is passed through the same lightweight alignment head consisting of a linear layer with GELU activation, a linear projection into the shared alignment space, and layer normalization. The audio encoder remains frozen throughout training. EEG and audio embeddings are $\ell_2$ normalized and optimized with a symmetric CLIP-style cross-entropy loss using a learned logit scale initialized at $\log(1/0.07)$. Unless otherwise stated, pretrained EEG backbones are initialized from the publicly released checkpoints and fine-tuned during alignment. All baselines are trained using the same optimization setup as our method, including AdamW, cosine learning-rate decay with warmup, gradient clipping, mixed precision training, and channel dropout augmentation.

\noindent
\textbf{EEG2Mel.}
We implement EEG2Mel~\citep{eeg2mel} as the direct EEG-to-mel reconstruction baseline, following its original design of predicting mel spectrograms from EEG prior to waveform reconstruction. The input is an EEG PSD tensor of shape $63 \times 125$, and the target is a $64 \times 32$ mel spectrogram. The network consists of five convolutional blocks with output channels $8,16,32,64,128$, kernel size 4, ReLU activations, batch normalization, and dropout in the first four blocks. The resulting representation is max-pooled, flattened, and passed through two 128-dimensional fully connected layers before the final mel-spectrogram regression layer. Training uses mean-squared error on normalized mel spectrograms. For waveform reconstruction, the predicted mel spectrogram is clamped to $[-1,1]$, rescaled to the original decibel range $[-70,14]$, converted to a linear spectrogram via the inverse mel transform with 64 mel bins at a 16 kHz sampling rate, and synthesized using Griffin-Lim. The generated audio is evaluated using the same metrics as all other methods in Table~\ref{tab:res}.

\noindent
\textbf{LaBraM.}
For LaBraM~\citep{labram}, we use the official NeuralTransformer backbone from the fine-tuning implementation and initialize it with the publicly released checkpoint. The backbone uses a patch size of 200, embedding dimension 200, 12 transformer layers, 10 attention heads, MLP ratio 4, drop-path rate 0.1, and mean pooling. Since NMED EEG windows contain 125 channels, we pass the full 125-channel montage through the LaBraM channel-index interface. Because the alignment windows are shorter than the LaBraM patch size, the time dimension is zero-padded to the nearest patch multiple, scaled by $1/100$, and reshaped into channel-by-patch tensors before being fed into the backbone. We discard the class token, average the remaining patch tokens, and pass the resulting representation to the shared alignment head.

\noindent
\textbf{EEGPT.}
For EEGPT~\citep{wang2024eegpt}, we use the official EEGPTClassifier backbone from the downstream EEGPT implementation and initialize it with the publicly released 58-channel, 4-second EEGPT checkpoint. The model is configured with input size $58\times1024$, patch stride 64, the official 58-channel name list, mean pooling, channel-convolution adaptation enabled, and no predictor or output-projection head. Since the NMED recordings contain 125 channels, the channel-convolution front end is used to map the input into the pretrained EEGPT channel space. Short alignment windows are temporally adapted to the 1024-sample input length expected by the pretrained model. The resulting 512-dimensional EEGPT representation is then passed to the shared alignment head.

\noindent
\textbf{CBraMod.}
For CBraMod~\citep{wang2024cbramod}, we use the official implementation and initialize it with the publicly released pretrained weights. The backbone is configured with patch size 200, four temporal patches, model dimension 200, feed-forward dimension 800, 12 transformer layers, and 8 attention heads. Since CBraMod expects 22 EEG channels while NMED contains 125 channels, we learn a $1\times1$ convolutional channel adapter to map the 125-channel input into the pretrained channel space. The time dimension is interpolated to 800 samples when necessary and reshaped into four 200-sample patches. We remove the original output projection, average the backbone features across channel and patch dimensions, and pass the pooled representation to the shared alignment head. Because the CBraMod patch embedding includes an FFT operation that is not reliably supported in bfloat16 on common accelerators, the FFT path is computed in float32 before casting back prior to the alignment head.

\subsection{Hyperparameters} 

The hyperparameters for pretraining and alignment are given in Tables \ref{tab:pretrain_hp} and \ref{tab:aligning_hyperparams}. The reconstruction step is based on the AudioLDM~\cite{audioldm} available at \url{https://github.com/haoheliu/AudioLDM}, with the sampling step set as 100. 

\begin{table}[h]
\centering
\caption{Encoder architecture and pretraining configuration. The left block lists the architecture and channel-oriented design parameters. The right block lists the multi-view self-distillation pretraining hyperparameters.}
\label{tab:pretrain_hp}
\small
\begin{tabular}{l c @{\hspace{2em}} l c}
\toprule
\multicolumn{2}{c}{\textbf{Architecture}} & \multicolumn{2}{c}{\textbf{Pretraining}} \\
\cmidrule(lr){1-2}\cmidrule(lr){3-4}
\textbf{Parameter} & \textbf{Value} & \textbf{Parameter} & \textbf{Value} \\
\midrule
Embedding dim       & 512  & Global views          & 2 \\
Patch size          & 50   & Local views           & 8 \\
Num channels        & 125  & Crop (global)         & [0.5, 1.0] \\
Channel dropout     & 0.2  & Crop (local)          & [0.1, 0.5] \\
Aligned dim         & 512  & Noise (global / local) & 0.01 / 0.03 \\
Num heads           & 16   & Learning rate         & 8e-5 \\
Num layers          & 8    & LR scheduler          & Cosine \\
Window size         & 1000 & Warmup steps          & 6000 \\
Window stride       & 800  & Train steps           & 30000 \\
                    &      & Batch size            & 60 \\
\bottomrule
\end{tabular}
\end{table}

\begin{table}[h]
\centering
\caption{Alignment hyperparameters.}
\label{tab:aligning_hyperparams}
\small
\begin{tabular}{l l c}
\toprule
\textbf{Category} & \textbf{Hyperparameter} & \textbf{Value} \\
\midrule
\multirow{2}{*}{Data windowing}
    & EEG window size      & 125 \\
    & EEG window stride    & 125 \\
\midrule
\multirow{6}{*}{Optimization}
    & Learning rate        & 1.2e-4 \\
    & LR scheduler         & Cosine \\
    & LR warmup steps      & 8000 \\
    & Max train steps      & 30000 \\
    & Max gradient norm    & 1.0 \\
    & Batch size (train / val) & 500 / 50 \\
\midrule
\multirow{3}{*}{Augmentation}
    & Crop scale           & [0.4, 1.0] \\
    & Noise level          & 0.05 \\
    & Channel dropout      & 0.2 \\
\bottomrule
\end{tabular}
\end{table}

\subsection{Computation of per-channel scores}
\label{app:interpretability}

Each electrode is represented as a separate input token to the alignment encoder, which produces a CLS token alongside the channel tokens at every transformer layer. We extract the attention weights from the CLS query to the channel keys in the final layer, average them across the 16 attention heads, and then aggregate across time windows within each song. To reduce subject-specific offsets, each subject-level attention map is z-scored across electrodes before averaging across subjects. The resulting per-song attention vector is projected onto the standard 10-10 scalp layout for visualization. This procedure does not require post-hoc saliency analysis, since electrode importance is directly encoded in the attention weights over channel tokens.

\subsection{Hardware and Computation Time} 

All experiments are performed on a NVIDIA RTX 6000 Ada GPU. 
The total GPU time for pretraining and alignment is about 20 hours.


\end{document}